\documentclass{article}

\usepackage{lineno,hyperref}
\usepackage{epsfig}
\usepackage{graphicx}
\usepackage{color}
\usepackage{subfigure}
\usepackage{graphics}
\usepackage{dcolumn}
\usepackage{bm}
\usepackage{multirow}
\usepackage{longtable}
\usepackage[ruled, lined]{algorithm2e}        
\usepackage{setspace}
\usepackage{amsthm}
\usepackage{ragged2e}
\usepackage{mathabx}
\usepackage{textcomp}
\usepackage[table]{xcolor}
\usepackage{tabularx}
\usepackage{changepage}

\newtheorem{theorem}{Theorem}
\newtheorem{prop}{Proposition}

\SetAlFnt{\small}
\SetAlCapFnt{\small}
\SetAlCapNameFnt{\small}
\SetAlCapHSkip{0pt}
\IncMargin{-\parindent}

\begin{document}

\textbf{
\large {\centerline{
A Graph-Based Approach to Analyze Flux-Balanced Pathways}}}
\large{\textbf{\centerline{in Metabolic Networks}}}
\vspace{1em}

\small{
\centerline{Mona Arabzadeh$^1$, Morteza Saheb Zamani$^1$, Mehdi Sedighi$^1$,}
\centerline{and Sayed-Amir Marashi$^2$}}
\vspace{1em}
\footnotesize{
\centerline{$^1$Department of Computer Engineering and Information Technology,}
\centerline{Amirkabir University of Technology, Tehran, Iran and}
\vspace{0.2em}
\centerline{$^2$Department of Biotechnology, College of Science, University of Tehran, Tehran, Iran.}
\vspace{1em}
\centerline{\{m.arabzadeh,szamani,msedighi\}@aut.ac.ir, marashi@ut.ac.ir}
}

\begin{abstract}
An Elementary Flux Mode (EFM) is a pathway with minimum set of reactions that are functional in steady-state constrained space.
Due to the high computational complexity of calculating EFMs, different approaches have been proposed to find these flux-balanced pathways.
In this paper, an approach to find a subset of EFMs is proposed based on a graph data model. The given metabolic network is mapped to the graph model and decisions for reaction inclusion can be made based on metabolites and their associated reactions. This notion makes the approach more convenient to categorize the output pathways.
Implications of the proposed method on metabolic networks are discussed.
\end{abstract}

\small {\emph{Keywords:} Elementary Flux Mode (EFM); Graph Data Model; Metabolic Network}


\normalsize
\section{Introduction}
Metabolic network models are among the well-studied models in biotechnology. The reconstruction of these networks is possible by collecting the gene-protein-reaction information from related genomic data and literature~\cite{henry2010high}. It is important to explore biologically relevant pathways in metabolic networks. Forcing constraints to a reconstructed biochemical network results in the definition of achievable cellular functions~\cite{price2004genome}. Mathematical representation of constraints are as \emph{flux-balance} constraints (e.g., conservation of flux) which means the network should be at the steady-state condition, and \emph{flux bounds} which limit numerical ranges of network parameters and coefficients such as the minimum and maximum range of fluxes for each reaction. Flux Balance Analysis (FBA), a method to predict the optimal growth rate of a a certain species when grown on particular set of metabolites~\cite{orth2010flux}, and Elementary Flux Mode (EFM) analysis, an approach to decompose a network to minimal functional pathways~\cite{klipp2008systems}, are among the approaches used in constraint-based analysis of metabolic networks.

EFMs~\cite{schuster1994elementary,schuster2000general} have been used in several biological applications such as
bioengineering \cite{schuster2002use},
phenotypic characterization \cite{radhakrishnan2010phenotypic},
drug target prediction \cite{parvatham2013drug} and
strain design \cite{machado2015co}.
The incorporation of kinetic analysis into EFMs enables a more complete description of cellular functions for which kinetics play a dominant role~\cite{papin2003metabolic}.

Several methods were introduced for finding EFMs based on double-description method \cite{schuster1994elementary}. Double-description is a technique to enumerate all extreme rays of a polyhedral cone. An improved approach to the primary method was introduced in which the null-space of the stoichiometric matrix is used instead of the matrix itself to generate EFM candidates \cite{wagner2004nullspace,urbanczik2005improved,quek2014depth}.
Various effective computational approaches have been proposed to speed-up previous methods for computing EFMs~\cite{gagneur2004computation,terzer2008large}. Some of these approaches led to the development of computational tools such as \emph{Metatool} \cite{von2006metatool} and \emph{EFMtool} \cite{terzer2008large}.
Besides, methods based on linear programming have been proposed that explore a set of EFMs with specific properties, such as $K$-shortest EFMs~\cite{de2009computing}, or EFMs with a given set of target reactions \cite{david2014computing}.
A new set of methods based on graph-theory, \cite{cespedes2015new,ullah2016gefm} tries to overcome the scalability problem of the double-description-based techniques.

According to the complexity of computing EFMs as discussed in~\cite{acuna2009modes,acuna2010note}, an additional challenge of extracting biological properties from large set of EFMs~\cite{papin2003metabolic} has been raised. As stated in~\cite{acuna2010note}, the complexity of enumerating all EFMs still remains unknown and the computational complexity of enumerating EFMs containing a specific reaction is hard.
Even for networks with the same number of reactions and metabolites,
knowing the number of EFMs in one network cannot necessarily help in finding the number of EFMs in the other one, as they have different connectivities and topologies.
This fact emphasizes the inherent structural information reflected by EFMs~\cite{klamt2002combinatorial} and is a motivation to extract a set of biologically meaningful EFMs according to a biological aspect such as motifs~\cite{peres2011acom} and thermodynamics~\cite{gerstl2015metabolomics}.

The main focus of this paper is to introduce a data model based on the AND/OR graph and to propose an approach to find flux-balanced pathways according to pathway topology and reaction stoichiometries. It is shown that the computed flux-balanced pathways are a biologically relevant subset of EFMs which include external input and output metabolites. In other words, the subset comprises all EFMs connecting input and output external metabolites.
Besides, based on the introduced graph data model, an upper-bound for the complexity of exploring EFMs containing external metabolites is calculated.
Using the topology of the network gives us the opportunity to make decisions according to the metabolite/reaction aspects to preserve or eliminate a particular reaction or metabolite in a certain pathway. Introducing a model to consider both \emph{topology} and \emph{stoichiometry} of a metabolic network for network analysis may lead to better biological decisions and output categorization.
The graph structure makes it easier to set-up rules for pathways and to explore the intended solution space via rules. Finally, our method can potentially be implemented on hardware (through a system design approach) more conveniently.

The rest of the text is organized as follows. Some required concepts and preliminaries are provided in Section~2. Before explaining the proposed approach to find the elementary flux modes in a given metabolic network (Section~\ref{sec:GBEFM}), we first describe the modified AND/OR graph model and its properties in Section~\ref{sec:DataModel}. Finally, Section~\ref{sec:res} is devoted to results and discussions and Section~\ref{sec:conc} concludes the paper.


\section{Preliminaries}
In this section, some basic concepts along with the formal definition of elementary flux modes are provided.
\subsection{Metabolic Networks}
Metabolic networks model the metabolism of living cells in terms of a set of biochemical reactions.
Definitions in this section are derived from~\cite{zanghellini2013elementary}. The biochemical reactions can be irreversible, i.e., the reaction can be active only in one direction, or reversible, i.e., the reaction can be active in both directions. The contributing metabolites in a reaction can be either substrates or products. Substrates are consumed and products are produced during the operation of a reaction.
The topology of a metabolic network is characterized by its $m \times n$ stoichiometric matrix, $S$, where $m$ and $n$ correspond to the number of metabolites and reactions, respectively. In this paper, external metabolites are not included in $S$. The value $S_{ij}$ represents the stoichiometric coefficient of the metabolite $i$ in the reaction $j$. $S_{ij}$ is positive/negative if the metabolite $i$ is produced/consumed. If this coefficient is zero it means that the metabolite $i$ does not contribute to the reaction $j$.
The network is considered in the steady-state if for each internal metabolite, the rates of consumption and production are equal. The reactions connected to the external metabolites are called \emph{Boundary} reactions.


\subsection{Elementary Flux Mode Definition}
A flux vector $\textbf{v} \neq \textbf{0}$ and $\textbf{v} \in {R}^n$ is considered an EFM if it meets the following conditions:
\begin{itemize}
  \item $v_{i} \geq 0$ for all $i \in$ irreversible reactions (thermodynamic constraint).
  \item $\textbf{S}.\textbf{v} = \textbf{0}$ (steady-state condition).
  \item There is no $\textbf{v}^{\prime} \in {R}^n$ with $\textbf{supp}(\textbf{v}^{\prime}) \subset \textbf{supp}(\textbf{v})$, where support of a mode is defined as $\textbf{supp}(\textbf{v})= \{i|v_i \neq 0\}$, (minimality limitation).
\end{itemize}
\section{Methods}
In this section, the proposed graph-based elementary flux mode analysis is discussed in detail. To do so, the data structure used in our approach is defined, and then, the proposed approach is introduced accordingly.
\subsection{Data Model}\label{sec:DataModel}
The proposed data model is based on the conventional AND/OR graph in computer science. However, we provide a different definition with additional features to make the model appropriate for our proposed algorithm. In our model, the coefficients of the metabolites in reactions are embedded in the graph structure as attributes of each node (explained in Definition 3, see below). A $pathway$ is defined as a set of reactions with their associated fluxes. The flux of each reaction in a certain pathway is also embedded in the graph.\\

\noindent
\textbf{Definition 1.} The Modified Graph for representing a metabolic network, denoted as $\mathbf{MG}$, is defined as a set of \emph{Nodes}, $\mathcal{N}_i$, i.e., $\mathbf{MG}$ = $\{\mathcal{N}_i|0\leq i\leq {M-1}\}$, where $M$ is the number of internal metabolites in the metabolic network.

\noindent
\textbf{Definition 2.} Each node $\mathcal{N}_i$ in $\mathbf{MG}$ is a 3-tuple $\mathcal{N}$ = ($i$, $I$, $O$) where
\begin{itemize}
  \item $i$ is the tag of a metabolite,
  \item $I$ is an array of input reactions that produce the metabolite and
  \item $O$ is an array of output reactions that consume the metabolite.
\end{itemize}

\noindent
\textbf{Definition 3.} Each $I$/$O$ in Definition 2 contains the following data:
\begin{itemize}
  \item The reaction $j$, $0\leq j\leq r-1$, where $r$ is the number of reactions consuming/producing the metabolite $i$,
  \item $I_{M_j}$/$O_{M_j}$, an array of the metabolites consumed/produced by reaction $j$. In other words, $I_{M_j}$/$O_{M_j}$=$\{m_{kj}|0\leq k\leq m-1\}$, where $m$ is the number of consumed/produced metabolites by the reaction $j$,
  \item $I_{{\mathord{\buildrel{\lower3pt\hbox{$\scriptscriptstyle\frown$}}\over M}}_j}$/$O_{{\mathord{\buildrel{\lower3pt\hbox{$\scriptscriptstyle\frown$}}\over M}}_j}$, an array of the metabolites produced/consumed by reaction $j$. In other words, $I_{{\mathord{\buildrel{\lower3pt\hbox{$\scriptscriptstyle\frown$}}\over M}}_j}$/$O_{{\mathord{\buildrel{\lower3pt\hbox{$\scriptscriptstyle\frown$}}\over M}}_j}$=$\{m_{kj}|0\leq k\leq m-1\}$, where $m$ is the number of produced/consumed metabolites by the reaction $j$ excluding the metabolite $i$ itself,
  \item The direction of the reaction for reversible reactions,
  \item $I_{c_{ij}}$/$O_{c_{ij}}$, the coefficient of the reaction $j$ for the produced/consumed metabolite $i$ in the stoichiometric matrix $S$ and
  \item $I_{f_{ijp}}$/$O_{f_{ijp}}$, the flux of the input/output reaction $j$ for a certain pathway $p$.
\end{itemize}

\noindent
A general configuration of a node $\mathcal{N}_i$ in $\mathbf{MG}$ is shown in Figure~\ref{fig:AndOR}. The incoming arcs (i.e., $I$ in Definition 2) in each $\mathcal{N}_i$ produce the metabolite $i$ and the outgoing arcs (i.e., $O$ in Definition 2) consume it. Therefore, consumed metabolites, $m_i$, and produced metabolites, $m^{\prime}_{i}$, contributing to reaction $j$ as shown in Equation~\ref{eq:3}, are as arcs between $\mathcal{N}_i$ nodes each associated to one metabolite. $I$ arcs and $O$ arcs in $\mathcal{N}_i$ nodes are related to each other by reaction tags in Definition 3.

\begin{equation}\label{eq:3}
r_j: m_1 + m_2 + ... + m_i + ...\rightleftharpoons m^{\prime}_{1} + m^{\prime}_{2} + ... +m^{\prime}_{i}+ ...
\end{equation}

\begin{figure}[!tpb]
\centering
\includegraphics[scale=0.4]{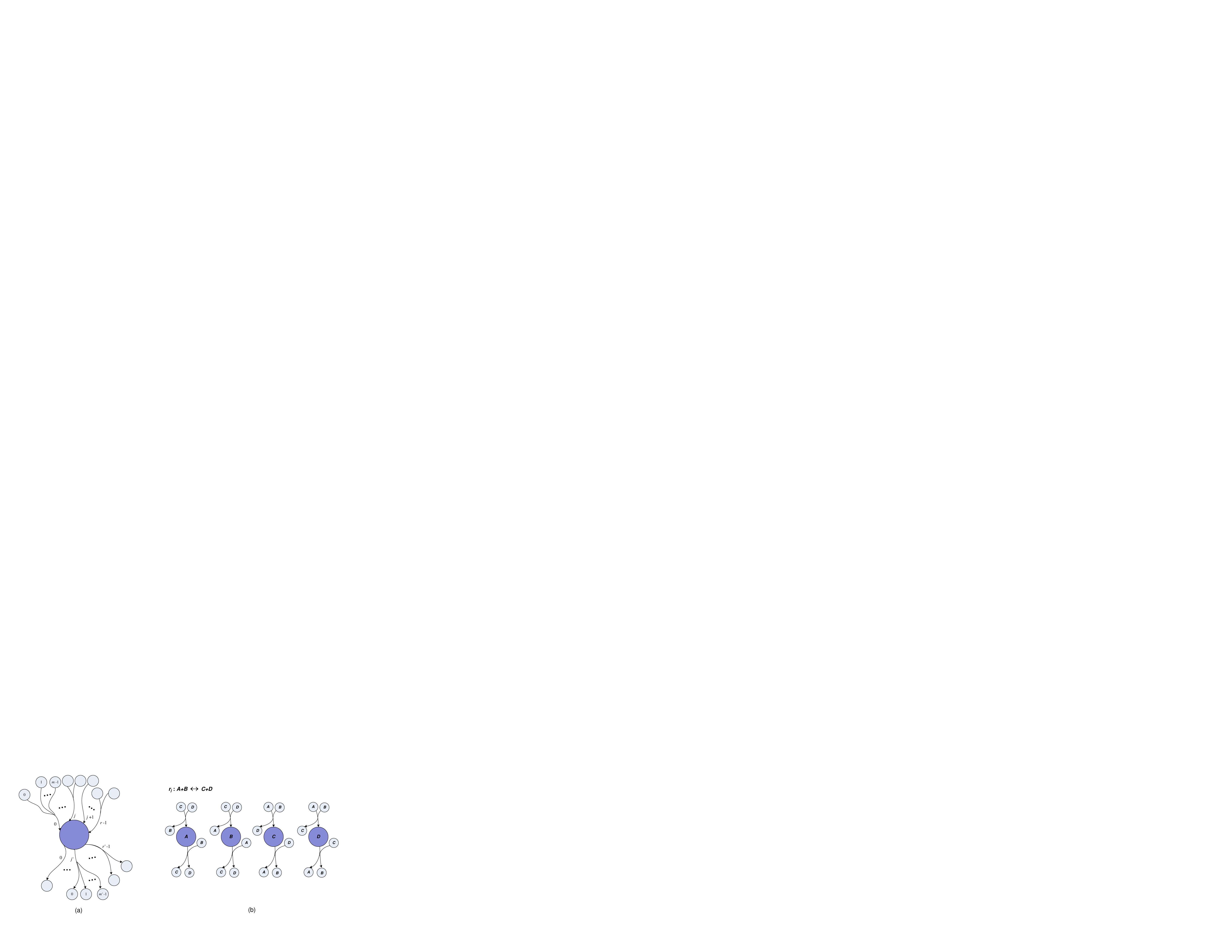}
\caption{(a) A general configuration of a node $\mathcal{N}_i$ in a modified AND/OR graph as defined in Definitions 1, 2 and 3. The node has $r$ input reactions and $r^{\prime}$ output reactions. The maximum number of AND metabolites for an input reaction $j$ and for an output reaction $j^{\prime}$ is considered as $m$ and $m^{\prime}$, respectively. (b) The way the reversible reaction $r_j$ is stored in an MG model: MG = \{A,B,C,D\}; since the reaction is reversible, it is added to both $I$ and $O$ array of each node. For example for node $A$, $I_{M_j} = \{C,D\}$ and $I_{{\mathord{\buildrel{\lower3pt\hbox{$\scriptscriptstyle\frown$}}\over M}}_j}=\{B\}$;
$O_{M_j}=\{C,D\}$ and $O_{{\mathord{\buildrel{\lower3pt\hbox{$\scriptscriptstyle\frown$}}\over M}}_j}=\{B\}$.}
\label{fig:AndOR}
\end{figure}

\subsection{GB-EFM: Graph-Based EFM Analysis Algorithm} \label{sec:GBEFM}
The proposed algorithm for finding EFMs of a metabolic network is based on the following facts:
\begin{enumerate}
  \item Starting from boundary metabolites, each incoming flux in a pathway produce some metabolite(s), and each produced metabolite should be consumed and the flux of the contributed reactions can be obtained according to the stoichiometric coefficients in order to make sure that the constructed pathway is in steady-state.
  \item To guarantee the elementarity of the produced pathways, two rules are followed: (1) Only one output from the node $\mathcal{N}_i$ should be considered when constructing a pathway and (2) multi-path condition should be checked in a node which has more than one input and one output. The multi-path condition occurs in a node when independent pathways share part of a pathway while traversing the graph.
  \item When one metabolite is consumed by a reaction, the presence of its AND-related metabolites are required. For example, when metabolites $A$ and $B$ are AND-related, if $A$ is consumed during the reaction $A$+$B$$\rightarrow$$C$+$D$, then $B$ should also be present which means $A$ and $B$ should be consumed simultaneously. Therefore, the algorithm is designed to have a forward/backward flow for the produced/consumed metabolites. In this example, the metabolites $A$ and $B$ are consumed while $C$ and $D$ are produced.
\end{enumerate}

\begin{algorithm}[!tpb]
\DontPrintSemicolon
\SetAlgoLined
\small
\textbf{          }
\KwIn{$S$: The stoichiometric matrix of the internal metabolites in a metabolic network and an array $V$ representing reversible reactions}
\textbf{          }
\KwOut{The equivalent $\mathbf{MG}$ of $S$}
\textbf{ }\\
\For{all metabolites $i$, $0\leq i\leq M-1$, the rows of $S$}
{
\For{all reactions $j$, $0\leq j\leq R-1$, the columns of $S$}
    {
    Create node $\mathcal{N}_i$.\\
    \For{all irreversible reactions}
    {\uIf {$S_{ij}> 0$} {add an input to $\mathcal{N}_i$ with the reaction tag $j$.}
    \ElseIf {$S_{ij}<0$} {add an output to $\mathcal{N}_i$ with the reaction tag $j$.}}
     \For{all reversible reactions $v_i \in V$}
     {Add both an input and an output to $\mathcal{N}_i$ with the reaction tag $j$.$^*$}
    }
}

\scriptsize
\justify
*The model demonstrates a hierarchy. Each node has a set of reactions (tagged as input or output) and each reaction itself has an array of input and-related metabolites and an array of output and-related metabolites. When we add a reaction to both input array and output array of a metabolite, each input/output becomes an object with the same tag $j$, that is the reaction name. This affirms how AND/OR information is preserved in the model.

\caption{Construction of Modified AND/OR Graph.}
\label{alg:one}
\end{algorithm}

\noindent
The proposed graph-based EFM analysis algorithm is composed of the following five steps:

\noindent
\textbf{Step 0. Construction of $\mathbf{MG}$.} The graph $\mathbf{MG}$ is constructed based on the stoichiometric matrix as stated in Algorithm 1.\\


\noindent
\textbf{Step 1. Finding independent pathways with dependent nodes.} In this step, starting from the metabolites connected to the boundary reactions, all possible pathways are constructed by traversing the internal metabolites.
Proposition~\ref{prop:step1} shows that the pathways are constructed with minimum number of possible reactions using Definition 4 and Definition 5. The semi-minimality occurs since different pathways are constructed from different outputs and each node is visited multiple times only if required.

\noindent
\textbf{Definition 4. Forward/Backward Flow.}
The \emph{forward} flow is applied for the output metabolites of the selected reaction and the \emph{backward} flow is applied for the and-related metabolites of that reaction.

\noindent
\textbf{Definition 5. Primary and Secondary Reactions.}
On each pathway, each node has a \emph{primary} input and a \emph{primary} output which directs the flow of the pathway and tags the metabolite as \emph{visited}. When an edge (i.e., a reaction) enters an already visited node in that pathway in a forward/backward flow as an input/output, that reaction is tagged as a \emph{secondary} input/output of that node.

\begin{prop}\label{prop:step1}
Starting from metabolites connected to the boundary reactions, $r$ pathways are constructed from each node where $r$ is the number of output/input reactions of that node in the forward/backward flow. The pathways are \emph{semi-minimal} based on the following statements:

\begin{itemize}
  \item each output/input reaction in forward/backward flow is traversed once and labeled as \emph{primary}, and
  \item when an edge enters an already visited node, the pathway is considered as closed in the node and the edge is labeled as \emph{secondary}.
\end{itemize}
\end{prop}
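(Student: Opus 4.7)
The plan is to argue the two halves of the proposition separately, both by a structural induction on the pathway-construction procedure, treating the forward and backward flows symmetrically.

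First I would establish the counting claim. Fix a boundary metabolite $\mathcal{N}_b$ and consider the recursive expansion of the algorithm. By Definition 2, the node $\mathcal{N}_b$ stores an array of output reactions; by Definition 5 each invocation selects exactly one of them as the \emph{primary} output, so the procedure branches into $r$ independent subprocedures, one per output reaction, producing $r$ distinct candidate pathways rooted at $\mathcal{N}_b$. The same argument applies inductively to every internal node reached through the forward flow, and symmetrically to the input arrays under the backward flow applied to AND-related metabolites (Definition 4). Since each reaction is distinguished by its tag $j$ from Definition 3, the pathways generated from different primary choices are genuinely different, giving exactly the claimed count.

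Next I would address the semi-minimality claim. The key observation is that the two bullets in the statement jointly prevent the appearance of any reaction whose deletion still yields a valid fragment of the pathway. The first bullet rules out using two different primary outputs of the same node in one pathway, so no output branch is traversed gratuitously. The second bullet rules out continuing past a node that has already been reached, which would otherwise re-introduce a reaction playing no new role in connecting boundary metabolites. By induction on the number of nodes visited along a pathway, any proper sub-path obtained by removing a primary reaction fails to reach at least one produced metabolite that must be consumed downstream, which violates the steady-state construction rule that motivates the forward/backward propagation. Hence no primary reaction can be dropped without breaking connectivity, giving semi-minimality in the local sense intended.

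The step I expect to be the main obstacle is pinning down precisely what \emph{semi-minimal} means in contrast to the full support-minimality condition in the EFM definition: the prefix ``semi'' signals that secondary edges and repeated visits induced by the backward flow are allowed, so the argument must carefully separate primary from secondary reactions and acknowledge that the multi-path condition flagged in the algorithm overview still has to be resolved by a later step before the pathway becomes a bona fide EFM. Once this distinction is made precise, the two bullets translate into straightforward invariants preserved by the recursion, and the proposition follows.
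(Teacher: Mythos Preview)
Your proposal is considerably more detailed than what the paper itself offers. In the paper, Proposition~1 is not given a formal proof at all: it is stated as a description of Step~1 of the algorithm, and the only justification is the single sentence preceding Definitions~4 and~5 (``The semi-minimality occurs since different pathways are constructed from different outputs and each node is visited multiple times only if required'') together with the short paragraph after the proposition explaining that each output/input reaction spawns an independent pathway. There is no induction, no invariant, and no separate treatment of the counting claim versus the semi-minimality claim.

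Your approach is therefore a genuine elaboration rather than a reproduction. The structural induction you propose for the counting part is sound and matches the intent of the algorithm. Your handling of semi-minimality, however, runs into exactly the obstacle you yourself flag: the paper never gives a precise definition of ``semi-minimal,'' so the inductive claim that ``no primary reaction can be dropped without breaking connectivity'' is attempting to prove a property the paper leaves informal. In the paper's usage, ``semi-minimal'' appears to mean nothing more than ``minimal up to the multi-way pathways that Step~2 will later remove,'' and the two bullets in the proposition are not conclusions to be derived but rather the procedural rules that \emph{define} the construction. Read that way, there is nothing to prove beyond observing that the rules are well-posed, which is essentially what the paper does. Your induction is not wrong, but it is supplying rigor the authors did not intend and that the undefined term cannot fully support.
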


In the forward flow of a node, for each output reaction, an independent pathway is constructed. Each pathway is traversed independently of the other pathways. In the backward flow of a node, for each input reaction an independent pathway is constructed.
In Figure~\ref{fig:step1} the forward/backward flow of each node $\mathcal{N}_i$ is shown in an illustrative example. The information of each pathway is saved independently.

\begin{figure}[!tpb]
\centering
\includegraphics[scale=0.6]{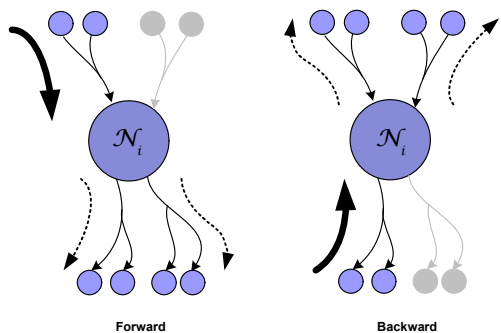}
\caption{Constructing possible pathways from outputs/inputs in forward/backward flow in Step~1. In the forward/backward flow, from the primary input/output illustrated by the thick arrow, two independent pathways are constructed along each output/input reaction illustrated by dashed arrows.}
\label{fig:step1}
\end{figure}


\noindent
\textbf{Step 2. Eliminate multi-way pathways.}
By using Proposition~\ref{prop:step2} and eliminating multi-way pathways (defined in Definition 7) which include nodes with multi-path condition (defined in Definition 6),
the set of semi-minimal paths is reduced to the set of minimal pathways.
An example of this is shown in Figure~\ref{fig:multiPath}. 

\noindent
\textbf{Definition 6. Multi-Path Condition.} This is a condition that occurs in a node with more than one input and more than one output and two independent pathways from different origins share this node.

\noindent
\textbf{Definition 7. Multi-Way Pathways.} Semi-minimal pathways that contain reactions from two independent pathways are referred to as multi-way pathways.

\begin{prop}\label{prop:step2}
Since different pathways are constructed from different outputs of a node and there are both forward and backward flows, depending on the order of traversing the nodes,
there exist multi-way pathways containing reactions from two independent pathways.
The pathway is minimal if it is not a multi-way pathway. In other words, it has no node with multi-path condition applied to it.
\end{prop}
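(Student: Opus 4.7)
The plan is to split Proposition~\ref{prop:step2} into two parts. The opening sentence, asserting the existence of multi-way pathways, is essentially a consequence of how Step~1 constructs pathways independently from each boundary reaction: whenever a node $\mathcal{N}_i$ has more than one incoming and more than one outgoing reaction, two independent traversals originating from distinct boundaries can both pass through $\mathcal{N}_i$, and flow-balancing there forces the resulting semi-minimal pathway to inherit reactions from both. A short constructive argument on the configuration of Definition~6, combined with the primary/secondary tagging of Definition~5, makes this concrete.

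The main content to prove is the equivalence: a semi-minimal pathway $P$ produced by Step~1 is minimal (in the EFM sense of Section~2.2) if and only if $P$ contains no node satisfying the multi-path condition. For the forward direction (multi-way $\Rightarrow$ non-minimal), I would invoke Definition~7 directly: if $P$ has a multi-path node $\mathcal{N}_i$, then by construction the reaction set of $P$ decomposes as $R(P)=R_1\cup R_2$, where $R_1$ and $R_2$ are the reaction sets of two independent pathways meeting only at $\mathcal{N}_i$. Each $R_j$ admits its own steady-state flux $\mathbf{v}_j$, and balancing the stoichiometric coefficients at $\mathcal{N}_i$ lets us express the flux vector of $P$ as $\mathbf{v}_P=\alpha\mathbf{v}_1+\beta\mathbf{v}_2$ with $\alpha,\beta>0$; hence $\textbf{supp}(\mathbf{v}_1)\subsetneq \textbf{supp}(\mathbf{v}_P)$, so $P$ violates the minimality condition.

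For the backward direction (no multi-path node $\Rightarrow$ minimal), I would argue by induction along the forward/backward propagation from the boundary entry. Without any multi-path node, at every visited node $\mathcal{N}_i$ the primary input uniquely determines a single primary output (and symmetrically for backward flow), because any second primary branch would require more than one input \emph{and} more than one output at $\mathcal{N}_i$, thereby creating a multi-path node. Consequently the fluxes along the whole pathway are determined, up to a single global positive scaling, by the stoichiometric coefficients and the steady-state equations at each visited node. Dropping any reaction then leaves some internal metabolite unbalanced, contradicting $\mathbf{S}\mathbf{v}=\mathbf{0}$, so no proper sub-support flux exists and $P$ is an EFM.

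The main obstacle I expect is the decomposition step in the forward direction: writing $\mathbf{v}_P=\alpha\mathbf{v}_1+\beta\mathbf{v}_2$ requires confirming that the stoichiometric balance at the shared multi-path node admits positive scalars for which both $\mathbf{v}_1$ and $\mathbf{v}_2$ are non-negative on the irreversible reactions and each independently satisfies $\mathbf{S}\mathbf{v}_j=\mathbf{0}$. Carefully analyzing the local linear system at the multi-path node, and appealing to the fact that each of $R_1,R_2$ was itself produced by Step~1 (and so is already a steady-state semi-minimal pathway by Proposition~\ref{prop:step1}), is what makes the decomposition rigorous.
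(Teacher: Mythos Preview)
The paper does not supply a formal proof of Proposition~\ref{prop:step2}. It is stated as an observation, justified only by the algorithmic description that follows it (the labeling scheme $[node.pathNum]$ and the remark ``this pathway can be functional by either of the found routes in the node'') together with the illustrative example in Figure~\ref{fig:multiPath}. In that sense your proposal already goes well beyond what the paper offers, and your forward-direction idea---that a multi-way pathway decomposes into two independently functional sub-pathways---is exactly the intuition the paper's one-line justification is gesturing at.

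There is, however, a genuine gap in your argument. You interpret ``minimal'' as the EFM support-minimality of Section~2.2 and build both directions on the steady-state condition $\mathbf{S}\mathbf{v}=\mathbf{0}$. But in the paper's pipeline Step~2 precedes Steps~3--4, and it is only in those later steps (Propositions~\ref{prop:step3} and~\ref{prop:step4}) that fluxes are assigned and balance is enforced; Proposition~\ref{prop:step1} asserts only \emph{semi}-minimality, not flux balance. Your appeal to Proposition~\ref{prop:step1} to certify that the sub-pathways $R_1,R_2$ each carry a steady-state vector $\mathbf{v}_j$ therefore does not go through: a pathway emitted by Step~1 need not satisfy $\mathbf{S}\mathbf{v}=\mathbf{0}$, and the paper explicitly discards such unbalanced pathways later in Step~4. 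The decomposition $\mathbf{v}_P=\alpha\mathbf{v}_1+\beta\mathbf{v}_2$ with $\mathbf{S}\mathbf{v}_j=\mathbf{0}$ is thus unsupported at this stage of the algorithm---precisely the obstacle you flagged, and your proposed resolution does not close it. The paper's operative notion of ``minimal'' in Step~2 is purely topological (the reaction set cannot be properly reduced while still routing between boundary metabolites), so a correct argument should be phrased in those graph-theoretic terms, with all flux reasoning deferred to Steps~3--4 as the paper does.
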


To solve the multi-path condition, for each produced pathway in Step 1, we traverse the subgraph of that pathway and tag each reaction with its associated metabolites. Since the structure of the pathway is known, one can label the outgoing/ingoing reactions in forward/backward flow of a pathway.

Then, in a pathway, for each node with more than one input and more than one output reaction, if one can find two routes crossing the node and the origins of the routes are different, the pathway should be discarded.
This is checked according to the labels, i.e., $[node.pathNum]$, which has been assigned to each reaction. $pathNum$ is a variable which counts the number of pathways generated by a node.
In other words, this pathway can be functional by either of the found routes in the node.

\begin{figure}[!tpb]
\centerline{
\includegraphics[scale=1.5]{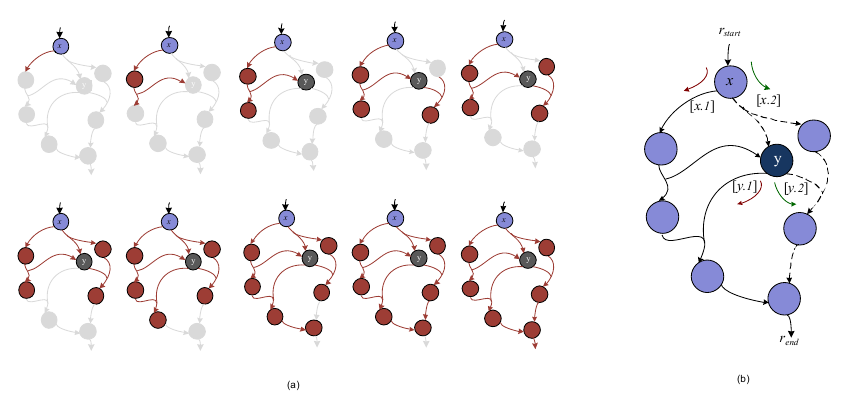}}
\caption{An example of the multi-path condition and a multi-way path. (a) This figure shows how the order of traversing the nodes may cause a multi-path condition in node $y$. (b) In general, starting from node $x$, two paths are constructed and their reactions are labeled as $x.1$ and $x.2$. After arriving at node $y$ from each of the paths, two other paths are constructed and their reactions are labeled as $y.1$ and $y.2$. Altogether four different paths are obtained, including the reaction pairs ($[x.1]$,$[y.1]$), ($[x.1]$,$[y.2]$), ($[x.2]$,$[y.1]$) and ($[x.2]$,$[y.2]$). As can be seen in the graph, the paths with reactions labeled with 1 are converged to each other and the paths which their reactions labeled with 2 are converged to each other as well. Therefore, two of the four paths passing ($[x.1]$,$[y.2]$), ($[x.2]$,$[y.1]$) in fact belong to the same metabolic pathway but the two other paths, path 1 shown by solid lines and path 2 shown by dashed lines, are independent. The starting and ending reactions belong to both paths. Step~2 identifies this by labeling the outputs and finding the multi-path nodes to eliminate multi-way paths which are a combination of independent paths.}
\label{fig:multiPath}
\end{figure}

\noindent
\textbf{Step 3. Adjusting reaction fluxes.}
In this step, every node $i$ in a pathway $p$ is examined and the fluxes of the primary input and the primary output of that node are calculated using the following equations:

\begin{equation}\label{eq:1}
 \begin{array}{l}
 O_{f_{ijp} }  = {\rm   }\frac{{I_{f_{ijp} } I_{c_{ij} } }}{{O_{c_{ij} } }},\,\,\,\,\,\,\,\,\,\,\,\,\,\,\,\,Forward \\\\
 I_{f_{ijp} }  = {\rm   }\frac{{O_{f_{ijp} } O_{c_{ij} } }}{{I_{c_{ij} } }},\,\,\,\,\,\,\,\,\,\,\,\,\,\,\,Backward \\
 \end{array}
\end{equation}

\noindent
In Equation~\ref{eq:1}, $j$ refers to the primary input/output reaction. Proposition~\ref{prop:step3} explains how all reactions in a pathway $p$ get a flux.

\begin{prop}\label{prop:step3}
Starting from a boundary reaction with a certain flux, all nodes are being traversed and all reactions in a pathway get a flux using Equation~\ref{eq:1}, as each node has a primary input and a primary output to be used in this equation. Equation~\ref{eq:1} states that the amount of the incoming and outgoing fluxes in $\mathcal{N}_i$ should be equal.
\end{prop}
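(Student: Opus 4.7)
The plan is to argue by induction on the traversal order induced by Step~1 that every reaction in the pathway $p$ receives a flux value, and that the value assigned is the unique one consistent with steady state at each internal node. I would begin by fixing notation: let the boundary reaction be $r_0$ with prescribed flux $f_0$, and recall from Proposition~\ref{prop:step1} that the pathway $p$ (after Step~2 has pruned multi-way pathways) is a subgraph in which every node $\mathcal{N}_i$ carries a unique primary input and a unique primary output, and in which every node is reachable from $r_0$ through a sequence of primary edges traversed by the forward or backward flow of Definition~4.

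First I would set up the base case: the node $\mathcal{N}_{i_0}$ incident to $r_0$ receives the flux $f_0$ on either its primary input or primary output array, depending on whether $r_0$ produces or consumes metabolite $i_0$. For the inductive step, assume every node at ``primary distance'' at most $d$ from $r_0$ has had its primary input flux and primary output flux assigned. Consider a node $\mathcal{N}_i$ at distance $d+1$, reached from some node $\mathcal{N}_{i'}$ whose primary edge $j$ is shared with $\mathcal{N}_i$. If the traversal is forward along $j$, then $j$ appears as $\mathcal{N}_i$'s primary input with flux $I_{f_{ijp}}$ equal to the already-known flux, and the first equation of~\eqref{eq:1} determines the primary output flux $O_{f_{ijp}}$; if the traversal is backward, the second equation of~\eqref{eq:1} symmetrically determines the primary input flux. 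The AND-related metabolites of the chosen reaction inherit the same flux on the reaction edge, so the forward/backward flow described in Fact~3 of Section~\ref{sec:GBEFM} propagates into those neighboring nodes.

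The main obstacle, in my view, will not be the algebraic propagation itself but justifying that the induction is well-defined, i.e.\ that no node ever receives two conflicting flux assignments on the same reaction. This is exactly what Step~2 ensures: if a node $\mathcal{N}_i$ with more than one input and more than one output were reached by two distinct primary routes, it would satisfy the multi-path condition of Definition~6, and the containing pathway would have been discarded as multi-way by Proposition~\ref{prop:step2}. I would therefore state as a structural lemma that in any surviving pathway each node has a unique primary input and primary output, so the recurrence~\eqref{eq:1} is applied at each node exactly once, producing a single value for each reaction incident to it. Reversible reactions cause no extra difficulty because they appear symmetrically in both the $I$ and $O$ arrays (see Figure~\ref{fig:AndOR}(b)) and the direction chosen during traversal fixes which branch of~\eqref{eq:1} applies.

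Finally, I would observe that Equation~\eqref{eq:1} is simply the rearrangement of the balance $I_{f_{ijp}}\,I_{c_{ij}} = O_{f_{ijp}}\,O_{c_{ij}}$ at $\mathcal{N}_i$ between its primary input and primary output, so the assignment manufactured by the induction automatically satisfies the steady-state condition row-by-row of the stoichiometric matrix restricted to $\mathrm{supp}(p)$. Since the induction terminates once every node at finite primary distance from $r_0$ is processed, and by Proposition~\ref{prop:step1} this exhausts the pathway, every reaction of $p$ ends up with a flux value, which is the conclusion claimed by Proposition~\ref{prop:step3}.
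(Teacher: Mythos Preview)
Your inductive argument is sound and considerably more detailed than what the paper offers: in the paper Proposition~\ref{prop:step3} carries no separate proof at all---it is stated as a descriptive summary of Step~3, and its justification is essentially the bare observation that every node has a primary input and a primary output by construction (Step~1), so Equation~\ref{eq:1} can be applied once at each node as the traversal proceeds. Your framework formalises this with an explicit induction on primary distance from the boundary reaction and by invoking Proposition~\ref{prop:step2} to rule out conflicting assignments; that is a legitimate and more rigorous rendering of what the paper leaves implicit.

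One overreach to flag: in your final paragraph you assert that the assignment ``automatically satisfies the steady-state condition row-by-row of the stoichiometric matrix restricted to $\mathrm{supp}(p)$.'' That is not what Equation~\ref{eq:1} delivers. It balances only the \emph{primary} input against the \emph{primary} output at each node; a node carrying secondary inputs or outputs (Definition~5) will in general \emph{not} be balanced after Step~3, and the paper says so explicitly when introducing Step~4 (``the consumption/production balance of the nodes with secondary reactions may be disturbed''). Proposition~\ref{prop:step3} only claims that every reaction \emph{receives} a flux and that Equation~\ref{eq:1} encodes equality of the primary in/out rates---not full steady state. Your core induction establishes exactly that, so the overreach is harmless for the proposition as stated, but you should not conflate the outcome of Step~3 with achieving $S\cdot v=0$ on the pathway.
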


\noindent
\textbf{Step 4. Balancing flux.}
After Step 3, the consumption/production balance of the nodes with secondary reactions may be disturbed. To fix this, we go through these nodes and find a way forward/backward to output/input reactions to recalculate the extra production/consumption of the unbalanced nodes. Using Proposition~\ref{prop:step4}, the pathways with no balance rate are discarded.
If $\sum {I_{c_{ij} }I_{f_{ijp} }  }  - \sum {O_{c_{ij} }O_{f_{ijp} }  }  = 0$, the balance condition is accepted for this node. Otherwise, on this pathway, the extra flux is added to the next reactions on the pathway, towards the boundary reactions using Equation~\ref{eq:2}. In this case, the difference between the input rates and the output rates is equal to $I_{c_{ie}} I_{f_{iep} } $ or $O_{c_{ie}}O_{f_{iep} } $ where $e$ implies the reaction which causes the extra production/consumption of the $\mathcal{N}_i$. The old/new value of the flux is shown by $o$/$n$ labels in Equation~\ref{eq:2} and $j$ refers to the primary input/output reaction.

\begin{equation}\label{eq:2}
\begin{array}{l}
 I_{f(n)_{ijp} }  = \frac{{I_{f(o)_{ijp} } I_{c_{ij} }  - I_{f_{iep} } I_{c_{ie} } }}{{I_{c_{ij} } }},\, \\\\
 I_{f(n)_{ijp} }  = \frac{{I_{f(o)_{ijp} } I_{c_{ij} }  + O_{f_{iep} } I_{c_{ie} } }}{{I_{c_{ij} } }},\, \\\\
 O_{f(n)_{ijp} }  = \frac{{O_{f(o)_{ijp} } O_{c_{ij} }  + O_{f_{iep} } I_{c_{ie} } }}{{O_{c_{ij} } }}, \\\\
 O_{f(n)_{ijp} }  = \frac{{O_{f(o)_{ijp} } O_{c_{ij} }  - I_{f_{iep} } I_{c_{ie} } }}{{O_{c_{ij} } }} \\
 \end{array}
\end{equation}

\begin{prop}\label{prop:step4}
Flux dependencies can prevent the rate of the production/consumption of an internal metabolite in a given topology of a pathway from being zero. Equation~\ref{eq:2} is used to calculate an update for the consumption/production rate of the node. In the case of
\begin{itemize}
  \item $I_{f(n)_{ijp} }=0$\,\,\,\,\,\,\,\,\,\,\,\, or \,\,\,\,\,\,\,\,\,\,\,\,$O_{f(n)_{ijp} } = 0$ or
  \item a repetitive loop over a node (i.e., getting back to a node from the same reaction multiple times), while trying to find a way out in a subgraph of a given pathway,
\end{itemize}
the pathway is discarded.
\end{prop}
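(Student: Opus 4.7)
The plan is to interpret Step~4 as a constructive attempt to solve the reduced linear system $S_p\,v_p = 0$, where $S_p$ is the submatrix of the stoichiometric matrix restricted to the reactions appearing in pathway $p$ and $v_p$ is the flux vector on those reactions. Step~3 already sets the primary fluxes so that the balance $\sum I_{c_{ij}}I_{f_{ijp}} = \sum O_{c_{ij}}O_{f_{ijp}}$ holds at every node when only its primary input and primary output are taken into account; Equation~\ref{eq:2} then shifts any imbalance caused by secondary reactions into the primary flux of the offending node. I would first verify algebraically that each application of Equation~\ref{eq:2} restores the balance at that node by construction: the four forms of the update merely add or subtract the secondary contribution $I_{c_{ie}}I_{f_{iep}}$ or $O_{c_{ie}}O_{f_{iep}}$ to the primary term, so the two weighted sums become equal.

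For the zero-flux branch, I would argue as follows. If the update yields $I_{f(n)_{ijp}} = 0$ or $O_{f(n)_{ijp}} = 0$, the primary reaction of $\mathcal{N}_i$ carries no flux in pathway $p$. However, by the construction in Step~1 every reaction of the pathway sits on a primary chain rooted at a boundary reaction, and removing one such primary edge either leaves a produced metabolite unconsumed (violating steady state) or yields a strict subset of reactions that is already flux-balanced, contradicting the minimality of the support required by the EFM definition. Either horn shows that no EFM has exactly this topology, so discarding the pathway is correct.

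For the loop branch, I would model the propagation of imbalances as a deterministic sequence of linear rewrites on the pathway's dependency subgraph: each update expresses the primary flux of a node as a linear combination of its secondary reactions' fluxes, which themselves become primary fluxes of downstream nodes. Returning to the same node via the same reaction means the recurrence has entered a cycle whose multiplicative coefficient is fixed by the stoichiometries $I_{c}$ and $O_{c}$ along the cycle. The argument is then that a nontrivial fixed point of this recurrence forces the cycle coefficient to be $1$, in which case the recurrence would have closed at the first traversal rather than revisiting; and if the cycle coefficient is not $1$, the only fixed point is the zero vector, which is excluded by the EFM condition $\mathbf{v} \neq \mathbf{0}$. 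Hence no valid flux assignment supports exactly this pathway, and discarding is justified.

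The main obstacle will be making the loop argument independent of the particular traversal order used by Step~4, so that ``repetitive loop'' really certifies non-solvability rather than a scheduling artifact. The cleanest way to close this gap is to observe that a valid flux vector, if it existed, would be a fixed point of the update map under \emph{every} order of visits; therefore, if the deterministic propagation cycles without reaching such a fixed point, no valid assignment exists at all. This reduces what looks like a combinatorial traversal question to a statement about the solvability of a small linear system over the pathway's reactions, and the rest is routine bookkeeping on $S_p$.
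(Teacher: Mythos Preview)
The paper does not actually prove Proposition~\ref{prop:step4}. It is stated as a procedural assertion describing what Step~4 does, and the only justification appears as a single sentence in the proof of Theorem~\ref{th:main} (``Non-qualified pathways are discarded in Steps~2 and~4 based on Propositions~\ref{prop:step2} and~\ref{prop:step4}''). So your proposal is not competing with an existing argument; you are attempting to supply rigor that the paper simply omits.

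That said, your plan has a real gap in the loop branch. You model the propagation as a linear recurrence on the pathway subgraph and assert a dichotomy: either the cycle's multiplicative coefficient equals~$1$, in which case ``the recurrence would have closed at the first traversal,'' or it is not~$1$ and the only fixed point is zero. Neither horn is justified. With coefficient~$1$ the recurrence is an \emph{affine} shift (Equation~\ref{eq:2} adds a constant term coming from the secondary contribution), so it can perfectly well fail to close on the first pass and still admit nontrivial solutions; and with coefficient $\neq 1$ the affine recurrence has a unique nonzero fixed point, not only zero. Your attempted rescue---that a valid flux vector would be a fixed point under \emph{every} visitation order---presupposes that the update map of Step~4 is the unique solver for $S_p v_p = 0$, which is exactly what needs proof. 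In short, ``repetitive loop'' as detected by the algorithm may be a scheduling artifact rather than a certificate of non-solvability, and your argument does not exclude this. The zero-flux branch is on firmer ground, though the clean statement is simply that a reaction with zero flux lies outside the support, so the candidate reaction set cannot itself be an EFM support; whether a strict subset is an EFM is irrelevant to the decision to discard this candidate.
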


\begin{figure}[!tpb]
\centering
\includegraphics[scale=0.55]{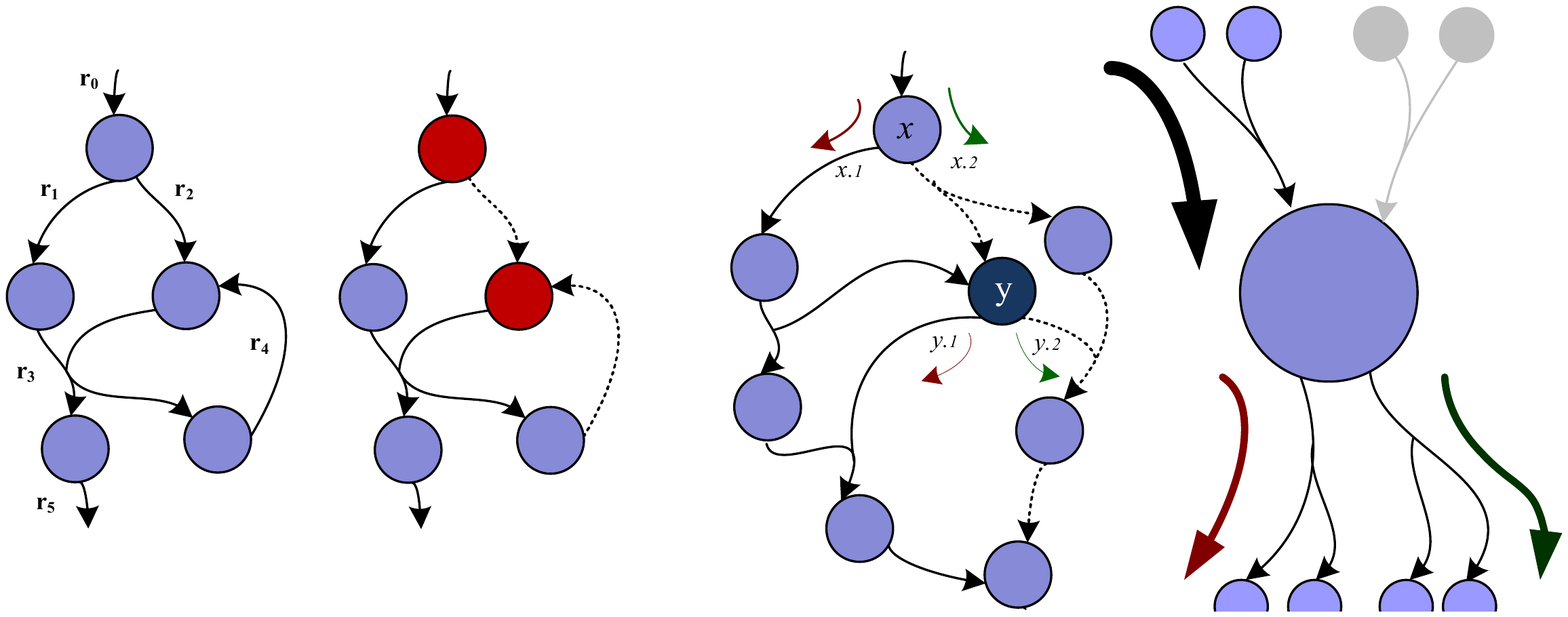}
\caption{An illustration of Steps~3 and 4 on a sample graph. The pairs ($r_0$,$r_1$), ($r_1$,$r_3$), ($r_3$,$r_5$), ($r_3$,$r_2$), ($r_3$,$r_4$) act as primary reactions (input, output) for the five nodes in Step~3. The two nodes with dashed secondary reactions update production/consumption of metabolites for $r_2$/$r_4$ in Step~4.}
\label{fig:step3-4}
\end{figure}

An illustration of Steps 3 and 4 on a sample graph is shown in Figure~\ref{fig:step3-4}.
\begin{algorithm*}
\label{alg:two}
\caption{Graph-Based EFM Analysis Algorithm.}
\textbf{         }\\
\DontPrintSemicolon
\SetAlgoLined
\scriptsize
\KwIn{The $\mathbf{MG}$ graph of a metabolic network derived from its given stoichiometric matrix.}
\KwOut{The set of elementary flux modes of the given network from input metabolites to output metabolites.}
\textbf{         }\\
Store all boundary metabolites of the network with their associated reactions in the queue $\mathcal{Q}$.\\
\For {each entry $\mathcal{Q}_i$ in the queue, $0\leq i\leq size(\mathcal{Q})- 1$}
{
    \For {each $O_j$/$I_j$, in forward/backward flow}
  {
      Start a new path for each reaction $j$ as a primary output/input. If the reaction is reversible, disable the reverse input/output direct.\\
      \uIf {$\mathcal{N}_i \notin \mathcal{P} $}
        {Add $\mathcal{N}_i$ to $\mathcal{P}$ accordingly for each path.}
      \Else
        {Add $j$ to $\mathcal{N}_i$ as a secondary output/input.}
    Push back $O_{M_i}$/$I_{M_i}$ to $\mathcal{Q}$ to be considered in the forward path of the algorithm.\\
    Push back $O_{{{\mathord{\buildrel{\lower3pt\hbox{$\scriptscriptstyle\frown$}}\over M} }}_i}$/$I_{{{\mathord{\buildrel{\lower3pt\hbox{$\scriptscriptstyle\frown$}}\over M} }}_i}$ to $\mathcal{Q}$ to be considered in the backward path of the algorithm.\\
    Repeat until $\mathcal{Q}$=$\emptyset$.}}
\textbf{         }\\
\For {all $\mathcal{P}_k$, $0\leq k \leq P-1$}
{
    \For {all $\mathcal{N}_i$ in $\mathcal{P}_k$}
    {
        \uIf { j > 1} {
            \For {all $O_j$/$I_j$, $0\leq j \leq r-1$ in forward/backward flow}
            {
                Add label $N_{ij}$ to $O_j$/$I_j$.
            }
        }
        \Else {
                Pass the label from input/output to output/input in forward/backward flow.
        }
    }
    \For {all $\mathcal{N}_i$ in $\mathcal{P}_k$ with more than two inputs and two outputs}
    {
        \If{ $\mathcal{N}_{x} \in \bigcup\limits_M{\mathcal{N}_{m}}$ where $M$ contains all visited metabolites associated with the reaction $j$}
        { \If {a cross match for $\mathcal{N}_{xy}$ and $\mathcal{N}_{xz}$ in the inputs is found as well as a match in the outputs}
        {
            Discard $\mathcal{P}_k$.
        }
        }
    }
}
\textbf{         }\\
\For {all remaining $\mathcal{P}_k$, $0\leq k \leq P-1$}
{
    \For {all $\mathcal{N}_i$ in $\mathcal{P}_k$}
    {
    Apply Eq. 1.
    }
}
\textbf{         }\\
\For {all remaining $\mathcal{P}_k$, $0\leq k \leq P-1$}
{
 \For {all $\mathcal{N}_i$ in $\mathcal{P}_k$ with Secondary inputs/outputs}
    {
        Calculate $\sum\limits_j {I_{f_{iek} } I_{c_{ie} } }  = \sum\limits_j {I_{f_{ijk} } I_{c_{ij} } }  - \sum\limits_j {O_{f_{ijk} } O_{c_{ij} } }$.\\
        \If {$\sum\limits_j {I_{f_{iek} } I_{c_{ie} } }\neq 0$}
        {
               Use Eq. 2 to pass the flux to the primary input/output reaction of $\mathcal{N}_i$.\\
               Repeat until reaching the boundary reactions.
        }
    }

}
\end{algorithm*}

The proposed GB-EFM analysis algorithm is presented in Algorithm 2 and is summarized in Theorem 1.

\begin{theorem}\label{th:main}
Starting from boundary nodes in a metabolic network, GB-EFM analysis algorithm produces all minimal flux-balanced pathways connecting input and output metabolites with the following properties:
\begin{itemize}
  \item${R_k }  \not\subset {R_l }$,\,\,\,\,for all\,\,\,\,$\mathcal{P}_k$ and $\mathcal{P}_l$,\,\,\,\,\,\,\,$k,l \in [0,P - 1]$\,\,\,\,\, $k \neq l$. \\
  \item $\sum\limits_j {I_{f_{ijk} } I_{c_{ij} } }  = \sum\limits_j {O_{f_{ijk} } O_{c_{ij} } }$,\,\,\, for all nodes\,\,\,\, $\mathcal{N}_i$,\,\,\,\,\,\,$0 \le i \le M-1$.
\end{itemize}
where $\mathcal{P}_k$ refers to a pathway $0\leq k \leq P-1$ and $R_k$ is the set of contributing reactions in the pathway. The above properties indicate that the produced pathways are elementary flux modes.
\end{theorem}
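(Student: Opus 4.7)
The plan is to show that the two bullet properties, together with the algorithm's handling of reversible reactions, recover the three defining conditions of an EFM from the Preliminaries, and then to argue completeness by appealing to Propositions~\ref{prop:step1}--\ref{prop:step4} in sequence. The thermodynamic condition $v_i\geq 0$ for irreversible reactions is immediate from Algorithm~1 (irreversible reactions are added to exactly one of $I$ or $O$) together with Step~1's rule that when a reversible reaction is chosen as a primary edge in a particular pathway, its reverse direction is disabled for that pathway. So the substantive content of Theorem~\ref{th:main} is the two bullets plus completeness of enumeration.

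First I would establish the steady-state property (second bullet) by induction along the construction order used in Steps~3 and~4. Proposition~\ref{prop:step3} gives the base case: after Step~3, the primary input and primary output of every visited node balance each other via Equation~\ref{eq:1}, so every node without secondary reactions is already in steady state. Proposition~\ref{prop:step4} handles the inductive step: residual imbalances at nodes with secondary reactions are either absorbed into primary fluxes by Equation~\ref{eq:2} and propagated toward the boundary, or (when a zero flux or a repeating loop is encountered) the pathway is discarded. A short induction on the distance from the offending node to the boundary then shows that any surviving $\mathcal{P}_k$ satisfies $\sum_j I_{c_{ij}}I_{f_{ijk}} = \sum_j O_{c_{ij}}O_{f_{ijk}}$ at every internal $\mathcal{N}_i$.

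Next I would prove the support-minimality bullet by combining Propositions~\ref{prop:step1} and~\ref{prop:step2}. Proposition~\ref{prop:step1} restricts each node to a single primary output in the forward flow and a single primary input in the backward flow, so the raw pathways produced by Step~1 are semi-minimal in the sense that no reaction on a primary branch can be removed without breaking the traversal. Proposition~\ref{prop:step2} then eliminates exactly the multi-way pathways, i.e.\ those whose reaction set is the union of two genuinely independent flux-feasible sub-pathways that happen to share a multi-path node. Combined, these two reductions rule out the existence of surviving pathways $\mathcal{P}_k,\mathcal{P}_l$ with $R_k\subset R_l$, since such an $R_l$ would necessarily contain an internal node at which two independent routes cross --- precisely the multi-path condition of Definition~6. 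Completeness (every EFM touching the boundary is enumerated) follows because Step~1 seeds the search from every boundary metabolite and branches on every alternative output/input at every node, so the pre-pruning set is a superset of the EFMs of interest, and Steps~2--4 discard only non-elementary or non-balanceable candidates.

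The hard part, I expect, will be the minimality argument: showing that the multi-path detection of Step~2 catches \emph{all} non-minimal survivors, not merely the obvious ones. One has to rule out subtler configurations --- pathways that share more than a single bottleneck node, reversible reactions forming internal cycles, or nested multi-way structures where the inner split is only exposed after the outer one is resolved. My plan here is a careful case analysis: assume a surviving $\mathcal{P}_l$ with $R_k\subsetneq R_l$ for some other surviving $\mathcal{P}_k$, take a minimal such pair, and exhibit a node of $\mathcal{P}_l$ at which $R_l\setminus R_k$ and $R_k$ both have an input and an output, contradicting the Step~2 labelling check $[\text{node}.pathNum]$ that would have discarded $\mathcal{P}_l$. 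The reversible-reaction subcase reduces to the irreversible one by the direction-disabling rule noted in the opening paragraph.
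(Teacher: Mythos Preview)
Your proposal takes essentially the same route as the paper: invoke Propositions~\ref{prop:step1}--\ref{prop:step2} for minimality and Propositions~\ref{prop:step3}--\ref{prop:step4} for the steady-state balance, then conclude that the survivors are EFMs connecting boundary metabolites. The paper's own proof is in fact a much terser sketch than yours---it simply cites the four propositions in order and asserts the conclusion, without your induction for balance or your contradiction argument for minimality---so your planned case analysis for the ``hard part'' (that Step~2's multi-path check catches \emph{all} non-minimal survivors) goes beyond what the paper actually establishes.
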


\begin{proof}
Step~0 provides the required interface from the stoichiometric matrix. By using Propositions~1 and~2, semi-minimal and then minimal pathways are produced, respectively. According to Propositions~3 and~4, steady-state condition is checked for the given pathways and fluxes are calculated. Non-qualified pathways are discarded in Steps~2 and~4 based on Propositions~\ref{prop:step2} and \ref{prop:step4}. The remaining pathways have minimum functional reactions with balanced fluxes as elementary flux modes. Therefore, the produced pathways are a subset of EFMs connecting input and output external metabolites.
\end{proof}

A complete framework of Algorithm 2 is illustrated in Figure~\ref{fig:fig}.

\begin{figure}[!tpb]
\centerline{
\includegraphics[scale=0.4]{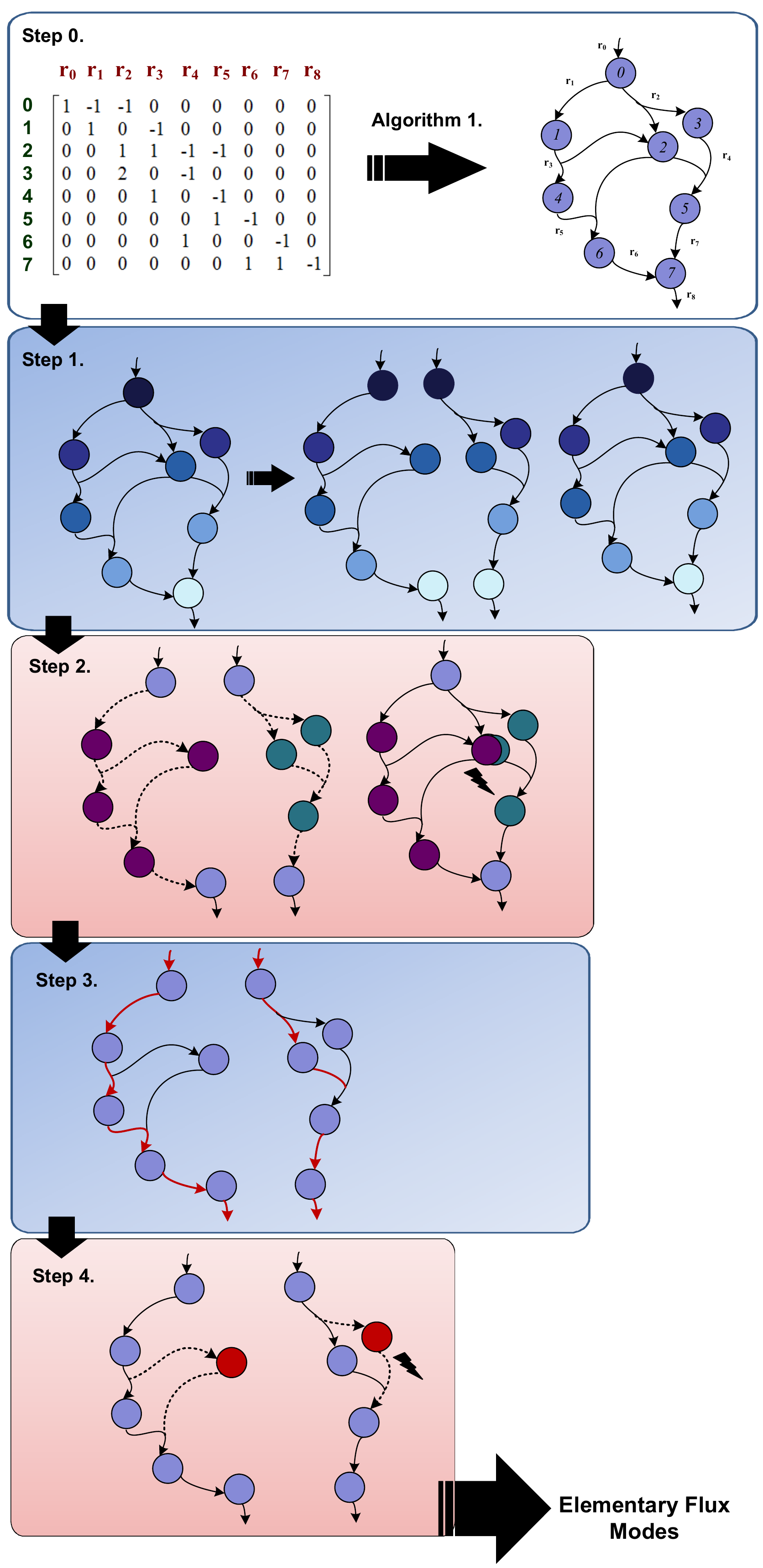}}
\caption{An example illustrating Steps~0 to 4 on a simple network. In Step 1, three different paths are constructed from a given $\mathbf{MG}$ produced in Step 0. The paths are divided as we go further in the graph, as shown by dark-colored nodes changing to light-colored nodes.
In Step 2, for the first two paths from left, there is no node with more than one input and one output. Therefore, multi-path condition is not applied.  However, the third path is constructed from the first two paths. The internal nodes of each of these two paths are shown by different colors. Multi-path condition is checked in their common node shown by overlapped circles. As a result, this path is discarded.
In Step 3, for each path, all nodes are traversed and primary inputs and primary outputs, illustrated by red solid lines, get a flux.
In Step 4, nodes with secondary input(s)/output(s) are traversed again to pass the extra flux to the boundary reactions. Secondary input(s)/output(s) are illustrated by dotted lines and their associated nodes are illustrated by red circles. According to the coefficients, as shown in the given stoichiometric matrix embedded in the graph, the second path cannot get a stable flux value for all reactions and is discarded.
The final output after Step~4 is one EFM.}
\label{fig:fig}
\end{figure}



\section{Results and Discussion}\label{sec:res}
\subsection{Computational Complexity} 
To calculate the computational complexity of Algorithm 2, the complexity of each step is analyzed below:
\begin{itemize}
  \item In Step 1, the maximum number of created pathways is calculated as
$P_{\rm{max}} = M_B  \times (r_{\rm{max}} )^{M_{\backslash B} }$, in which $M_B$ is the number of boundary metabolites, $M_{\backslash B}$ is the number of all other internal metabolites except the boundary ones and $r_{\rm {max}}$ is the maximum number of input/output reactions for all nodes.

  \item In Step 2, for each pathway $\mathcal{P}_i$, there are two traversals with the complexity of (1) $M_{\mathcal{P}_i}$, the number of all nodes in a pathway, and (2) $M_{t\mathcal{P}_i}$, the number of all tagged nodes with more than one input and one output in that pathway. Therefore, the order of maximum traversals is $P \times [M_{x\mathcal{P}_i}\times M_{\mathcal{P}_i}]$ or $P\times M^2$ where $M_{\mathcal{P}_i}$ and $M_{t\mathcal{P}_i}$ $\leq$ $M$.

  \item In Step 3, all nodes in $\mathcal{P}_i$ should be traversed which requires ${P^{\prime}} \times M_{\mathcal{P}_i}$ iterations. ${P^{\prime}} \leq P$ because some pathways may be discarded in Step 2. Therefore, the worst-case computational complexity for this step would be $P \times M$.

  \item In Step 4, for each $\mathcal{P}_i$ and for all nodes with secondary reactions, (i.e., $M_{v\mathcal{P}_i}$ nodes), at least half of the nodes should be traversed to lead the extra production/consumtion of metabolites to input/output. Therefore, ${P^{\prime}}\times[M_{v\mathcal{P}_i} \times M_{\mathcal{P}_i/2}]$ is the computational complexity of this step. Since $P^{\prime} \leq P$ and $M_{v\mathcal{P}_i},M_{\mathcal{P}_i/2} \leq M$, $P \times [M^2/2]$ is the computational complexity in the worst-case.
\end{itemize}

Considering all items above, $P \times O(M^2)$, or as a result ${r_{\rm{max}}}^M \times O(M^2)$ is the computational complexity of the algorithm if we assume $M_B \ll M$. The computational complexity of the approach can be considered as an upper-bound for the number of EFMs including external reactions.


\subsection{EFM Topology Description}
Three different topologies for an EFM can be observed in metabolic networks:
\begin{itemize}
  \item Internal pathways with no boundary reaction(s),
  \item pathways with only input or only output reaction(s) consisting of an internal loop,
  \item pathways from input reaction(s) to output reaction(s).
\end{itemize}
These topologies are illustrated in Figure~\ref{fig:EFMtopol}.
The main focus of Algorithm 2 is the third topology. However, in order to expand this focus to all EFMs, the algorithm can be easily modified to keep pathways without the contribution of output reactions (in Case 2) rather than discarding them, or starting from internal reactions (in Case 1).

However, from the biological point of view, internal pathways with no boundary reactions, (i.e., loops) are called ``futile cycles'' and are not biotechnologically relevant~\cite{burgard2001minimal}, and the pathways with boundary inputs and no output implicate inconsistency in the network~\cite{gevorgyan2008detection}. Therefore, a meaningful subset of EFMs are targeted here to keep more relevant EFMs in the output.

\begin{figure}[!tpb]
\centering
\includegraphics[scale=0.55]{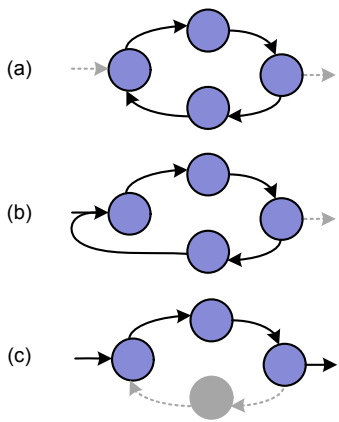}
\caption{Categorization of EFM topologies as explained in Section 4.2. An example of: (a) a ``futile cycle'' EFM, (b) an inconsistent route: a path with only input reaction consisting of an internal loop, and (c) a consistent non-cyclic EFM: a straight path (connecting external metabolites).}
\label{fig:EFMtopol}
\end{figure}

\subsection{Application to Metabolic Network Models}
To demonstrate the functionality of the model, the GB-EFM method has been implemented in C++ and tested on an Intel Core-i5 CPU with 4 GB RAM. The source code and results are provided on https://github.com/marabzadeh/GB-EFM.
A small metabolic network comprising tricarboxylic acid cycle, glyoxylate shunt and adjacent reactions of amino acid synthesis in \emph{E.} \emph{coli}~\cite{schuster1999detection} is considered to describe the steps of the algorithm and proof the concept. The network has one boundary input reaction and four boundary output reactions among all 25 reactions and 18 internal metabolites as shown in Table~1. Reactions $R_{7}$, $R_{12}$, $R_{13}$ and $R_{24}$ are boundary output reactions. The elementary flux modes were obtained using EFMtool~\cite{terzer2008large} as well as the proposed method.
In the case of considering all reactions as irreversible, EFMtool produces 12 EFMs in which 3 are cycles, 2 are inconsistent routes with only an input reaction and no output (mentioned in~\cite{gevorgyan2008detection} as elementary leakage modes) and the other 7 are consistent, non-cyclic pathways. Considering the reversible reactions in the table as reversible, 16 EFMs are produced, in which 3 and 2 are cycles and inconsistent routes belonging to the first and the second category as defined in Section 4.2, respectively, and 11 are pathways from input to an output as produced by GB-EFM. All information are provided in Supplementary file 1. The information of inconsistent routes with further description are provided in Supplementary file 2.

The number of pathways produced by the proposed algorithm after Step~1 is 40. After Step~2, this is reduced to 30 and after applying Steps~3-4, 5 pathways (final EFMs) remain. These pathways are the same as the eleven ``acyclic'' EFMs obtained by EFMtool. The resulting EFMs are illustrated in Figure~\ref{fig:EFMExample}.

\begin{table}
\caption{The list of reactions for the sample network of $E.$ $coli$ model and the resulting EFMs obtained by GB-EFM. External output metabolites and cofactors are marked by asterisks. The external metabolite consumed by the input boundary reaction $R_{15}$ is marked by two asterisks.}
\label{tab:table} {
\centerline{
\tiny
\begin{tabular}{lllcl|ccccccccccc}
\hline
Reactions  &&&&				&\multicolumn{11}{l}{EFMs} 	\\
\hline
\hline
\#	&Name   &Cons	&Dir 	&Prod	&1	&2	&3	&4	&5 &6 &7 &8 &9 &10 &11	\\
\hline
R0 	&Pyk 		 &PEP + ADP$^*$			&$\Rightarrow$ 		&Pyr + ATP$^*$       			&0	&1	&1	&3	&2	&3	&1	&2	&0	&2	&0\\	
R1 	&AceEF 		 &Pyr + NAD$^*$ + CoA 		&$\Rightarrow$ 		&AcCoA + CO$_{2}$$^*$ + NADH$^*$	&0	&0	&1	&3	&2	&3	&1	&2	&0	&2	&0\\
R2 	&GltA 		 &OAA + AcCoA 			&$\Rightarrow$ 		&Cit + CoA				&0	&0	&1	&2	&1	&2	&1	&1	&0	&1	&0\\
R3 	&Icd 		 &IsoCit + NADP$^*$ 		&$\Rightarrow$ 		&OG + CO$_{2}$$^*$ + NADPH$^*$		&0	&0	&1	&1	&0	&1	&1	&0	&0	&0	&0\\
R4 	&SucAB 		 &OG + NAD$^*$ + CoA 		&$\Rightarrow$ 		&SucCoA + CO$_{2}$$^*$ + NADH$^*$ 	&0	&0	&0	&0	&0	&1	&1	&0	&0	&0	&0\\	
R5 	&Icl 		 &IsoCit 			&$\Rightarrow$ 		&Succ + Gly				&0	&0	&0	&1	&1	&1	&0	&1	&0	&1	&0\\
R6 	&Mas 		 &Gly + AcCoA 			&$\Rightarrow$ 		&Mal + CoA				&0	&0	&0	&1	&1	&1	&0	&1	&0	&1	&0\\
R7 	&AspCon    	 &Asp 				&$\Rightarrow$ 		&Asp\_ex$^*$				&1	&0	&0	&0	&1	&0	&0	&0	&0	&0	&0\\
R8 	&AspA 		 &Asp 				&$\Rightarrow$ 		&Fum + NH3$^*$				&0	&0	&0	&0	&0	&0	&0	&0	&1	&0	&0\\	
R9 	&Pck 		 &OAA + ATP$^*$ 		&$\Rightarrow$ 		&PEP + ADP$^*$ + CO$_{2}$$^*$		&0	&0	&0	&0	&0	&0	&0	&0	&0	&0	&0\\
R10 	&Ppc 		 &PEP + CO$_{2}$$^*$ 		&$\Rightarrow$ 		&OAA					&1	&0	&1	&0	&0	&0	&1	&0	&1	&1	&1\\
R11 	&Pps 		 &Pyr + ATP$^*$ 		&$\Rightarrow$ 		&PEP + AMP$^*$				&0	&0	&0	&0	&0	&0	&0	&0	&0	&0	&0\\
R12 	&GluCon 	 &Glu 				&$\Rightarrow$ 		&Glu\_ex$^*$				&0	&0	&1	&1	&0	&0	&0	&0	&0	&0	&0\\	
R13 	&AlaCon 	 &Ala 				&$\Rightarrow$ 		&Ala\_ex$^*$				&0	&1	&0	&0	&0	&0	&0	&0	&0	&0	&0\\
R14 	&SucCoACon	 &SucCoA 			&$\Rightarrow$ 		&Suc\_ex$^*$ + CoA			&0	&0	&0	&0	&0	&1	&1	&1	&1	&2	&1\\
R15 	&Eno 		 &PG$^{**}$ 			&$\Leftrightarrow$ 	&PEP					&1	&1	&2	&3	&2	&3	&2	&2	&1	&3	&1\\
R16 	&Acn 		 &Cit 				&$\Leftrightarrow$ 	&IsoCit					&0	&0	&1	&2	&1	&2	&1	&1	&0	&1	&0\\
R17 	&SucCD 		 &SucCoA + ADP$^*$ 		&$\Leftrightarrow$ 	&Succ + ATP$^*$ + CoA			&0	&0	&0	&0	&0	&0	&0	&-1	&-1	&-2	&-1\\
R18 	&Sdh 		 &Succ + FAD$^*$ 		&$\Leftrightarrow$ 	&Fum + FADH$_{2}$$^*$			&0	&0	&0	&1	&1	&1	&0	&0	&-1	&-1	&-1\\	
R19 	&Fum 		 &Fum 				&$\Leftrightarrow$ 	&Mal					&0	&0	&0	&1	&1	&1	&0	&0	&0	&-1	&-1\\
R20 	&Mdh 		 &Mal + NAD$^*$ 		&$\Leftrightarrow$ 	&OAA + NADH$^*$				&0	&0	&0	&2	&2	&2	&0	&1	&0	&0	&-1\\
R21 	&AspC 		 &OAA + Glu 			&$\Leftrightarrow$ 	&Asp + OG				&1	&0	&0	&0	&1	&0	&0	&0	&1	&0	&0\\
R22 	&Gdh 		 &OG + NH$_{3}$$^*$ + NADPH$^*$ &$\Leftrightarrow$ 	&Glu + NADP$^*$				&1	&1	&1	&1	&1	&0	&0	&0	&1	&0	&0\\	
R23 	&IlvEAvtA 	 &Pyr + Glu 			&$\Leftrightarrow$ 	&Ala + OG  				&0	&1	&0	&0	&0	&0	&0	&0	&0	&0	&0\\
R24 	&SucEx		 &Suc 				&$\Leftrightarrow$ 	&Suc\_ex$^*$	  			&0	&0	&0	&0	&0	&1	&1	&1	&1	&2	&1\\
\hline
\end{tabular}}
}
\end{table}

\begin{figure}[!tpb]
\centerline{
\includegraphics[scale=0.1]{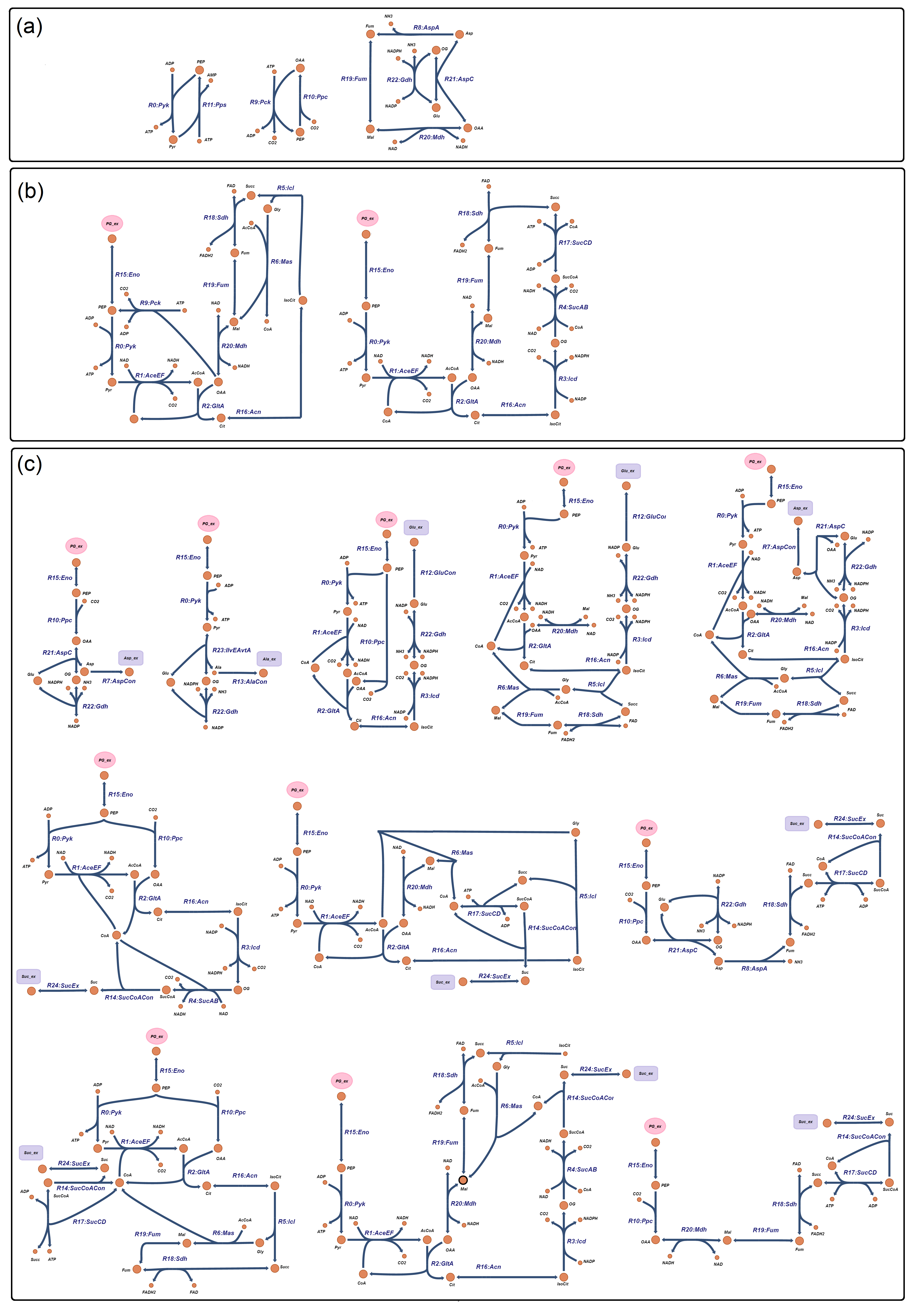}}
\caption{The resulting EFMs of the metabolic network comprising tricarboxylic acid cycle, glyoxylate shunt and adjacent reactions of amino acid synthesis in \emph{E.} \emph{coli}~\cite{schuster1999detection}. Abbreviations are explained in~\cite{schuster1999detection}. EFMs are depicted using Escher~\cite{king2015escher}. (a) ``futile cycle'' EFMs, (b) inconsistent routes: paths with only input reaction consisting of internal loops, with no output reaction, and (c) consistent non-cyclic EFMs: paths connecting external metabolites; produced by both EFMtool and GB-EFM. PG is an input metabolite shown by pink ellipse and Asp$\_$ex, Glu$\_$ex, Ala$\_$ex and Suc$\_$ex are output metabolites shown by purple squares. External and internal metabolites are illustrated by larger circles and currency metabolites by smaller ones.}
\label{fig:EFMExample}
\end{figure}

Besides, a set of moderate-sized metabolic networks are considered as further test cases. The first network is the pentose phosphate pathway in $trypanosoma$ $brucei$~\cite{kerkhoven2013handling}, the second one is the simple network of human red blood cell metabolism~\cite{wiback2002extreme} and the third one is an $E.$~$coli$ core model obtained from~\cite{king2016bigg}.
GB-EFM found all EFMs from \emph{Glucose} as reported in Table~\ref{tab:table2}. Computational time for all the networks was less than one second. The currency metabolites such as ADP, ATP, AMP, CO$_{2}$, H$_{2}$O, O$_{2}$, H$_{2}$, NH$_{4}$, Pi, NAD, NADH, NADP and NADPH were removed from the input stoichiometry matrix of all samples. As stated in the literature~\cite{kaleta2009efmevolver,de2009computing}, some simplifications to the system\textquotesingle s model can be performed in order to reduce the complexity of the problem, such as setting currency metabolites like cofactors and energy metabolites to external. According to ~\cite{kaleta2009efmevolver}, for energy currency metabolites like ATP, NADH and FADH, since their concentration is assumed to be constant they are not required to be balanced by an EFM. PG is an input metabolite and Asp$\_$ex, Glu$\_$ex, Ala$\_$ex and Suc$\_$ex are output metabolites.
\begin{table}
\caption{Number of EFMs starting from \emph{Glucose} in some networks.}
\label{tab:table2} {
\small
\begin{tabular}{lcc}
\hline
Network  							&Size ($m*n$)	 	&\#EFMs	\\
\hline
\hline
Glycolysis and Pentose Phosphate pathway in $T.$ $brucei$		&26$*$35						&4	\\
Human red blood cell						&20$*$50						&20	\\
$E.$ $coli$ core (Escherichia coli str. K-12 substr. MG1655)		&53$*$94						&47	\\	
\hline
\end{tabular}}
\end{table}

\subsection{Discussion}

According to the graph data model, the order of traversing the nodes and the relations between reactions and metabolites can be maintained for each pathway. Besides, decisions to select or eliminate a particular reaction or metabolite can be made during the algorithm. Since in this approach the traversal is on the graph, many pathways are produced which may be discarded afterwards.

The proposed data model gives the opportunity of different levels of parallelism to trace pathways. Therefore, the complexity of implementation can be reduced by optimizing the method to only focus on pathways that are desirable in EFM analysis and by exploiting advantages of the possible parallelism in the method. The graph structure makes a good track of unbalanced metabolites. Our approach tries to decide if a certain ``pathway'' is EFM or not, merely based on topology rules (instead of recognizing the elementarity through rank test or comparing new candidates with produced ones).

Our strategy brings up the opportunity of object-oriented programming by considering a metabolite as an object. The method at least in its present form, might not be applicable to large-scale networks. However, the graph structure makes it easier to set-up rules for pathways and explore the intended solution space via rules. The proposed method can potentially be implemented through a system design approach on hardware more conveniently.

A hypergraph is a simplified type of the modified AND/OR graph. The application of hypergraphs in biological networks has been reviewed in ~\cite{klamt2009hypergraphs}. In the introduced model, each arc has two different input and output coefficients. Besides, for each arc, a dynamic label for each pathway is defined, that is, the flux of the reaction represented by that arc. The proposed framework can be considered as an extension to the well-known flow network problem for hypergraphs. The edges are directed and each has been associated with a weight. However, the weight of an output arc of a node is different from the weight when the arc enters another node as an input. Besides, the hypergraph structure complicates pathways topology~\cite{marashi2014mathematical}. The complication arises since fluxes of contributing input and output reactions over a node may be related. This results to the non-linear property of fluxes. The non-linearity has been considered in our flux calculation procedure (see Eq. 2 and Eq. 3).

The first introduced double-description method for finding EFMs explores the whole set of pathways to find pathways that are in the steady-state solution space and then selects EFMs by comparing the reaction subset of pathways~\cite{schuster1994elementary}. In the improved double-description versions of the method which use null-space of the stoichiometry matrix as an input, different combinations of pathways in the null-space are calculated and then the reaction subset of pathways are compared~\cite{wagner2004nullspace,urbanczik2005improved,quek2014depth}. The main difference of the GB-EFM method with double-description-based methods is that GB-EFM first calculates the shortest pathways in the solution space and then checks to see if these pathways can be  in the steady-state condition by using some rules on the topology of the pathway.

The method of~\cite{cespedes2015new} explores the shortest pathways in the AND/OR graph of the network. The authors did not take into account the stoichiometry coefficients. Thus, the resulting pathways may or may not be EFM.  In the method of~\cite{ullah2016gefm}, all reaction combinations to balance a metabolite are calculated and all EFMs are computed accordingly. The order of traversing unbalanced metabolites is based on the graph model. The proposed model in this paper traverses the graph and keeps both the stoichiometry information and the minimality of the pathway with the opportunity of not exploring the whole solution space. The methods based on linear programming usually restrict their method to integer variables~\cite{de2009computing} which is limiting. In the case of the proposed method this is not a serious concern.

The presented approach, as discussed earlier, aims at finding EFMs by specific characteristics, i.e., the third category is Section 4.2. This brings up the fact that we are looking for new ways to reduce the solution space. EFMtool and the method of~\cite{ullah2016gefm} target the whole set of EFMs, which means either they find the complete set of EFMs of a metabolic network (which is a lot of non-categorized output information) or they fail to find any EFM. In comparison, our method is looking for a meaningful subset of EFMs which leads to a categorized output set with the cost of losing speed.

In several approaches such as~\cite{terzer2008large,de2009computing}, reversible reactions are considered as two irreversible reactions and the two-cycle EFMs are being removed during the EFM calculation. As it has been studied in~\cite{larhlimi2008inner}, by splitting reversible reactions, the size of the flux cone description increases by one. However, in the proposed method, while reversible reactions are added to both input and output of a metabolite, they are treated as reversible routes. When a reversible route is used in one direction, the other direction is disabled. Consequently, the proposed method preserves the dimension of the flux cone.

\section{Conclusion and Future Work}\label{sec:conc}
In order to calculate the elementary flux modes of a given metabolic network, an algorithm was proposed based on a modified AND/OR graph. It was shown in the paper that the steps in the algorithm leads to exact pathways according to EFM definition. The worst-case computational complexity of the algorithm was calculated as $O({r_{\rm{max}}}^M \times M^2)$ which is a newly reported upper-bound for the number of EFMs with external metabolites.
Additionally, a set of test cases was provided to prove the concept of the algorithm.

The main focus of this paper was to introduce a model to find minimal flux-balanced pathways in metabolic networks. The model can also be applicable to several constraint-based pathway analysis approaches. Using the potential parallelism in the method to speed-up the algorithm on a hardware structure and proposing different categorization of EFMs based on reactions/metabolites characteristics, as is possible according to the graph data model, are considered as our future research.

\section*{Acknowledgements}
Authors would like to thank Dr. Nathan Lewis (UCSD) for helpful discussions on the concept.



\end{document}